\theoremstyle{remark}
\begin{document}

\newtheorem{prop}{Proposition}

\title{Measure-independent anomaly of nonlocality}
\author{S. Camalet}
\affiliation{Laboratoire de Physique Th\'eorique 
de la Mati\`ere Condens\'ee, UMR 7600, Sorbonne 
Universit\'es, UPMC Univ Paris 06, F-75005, 
Paris, France}

\begin{abstract}
We show that any Bell local state, with a hidden 
nonlocality that can be revealed by local 
filtering, is more, or equally, entangled than 
nonlocal states. More precisely, it can be 
deterministically transformed into a nonlocal 
state, by local operations and classical 
communication. For such a state, there is 
a clear anomaly of nonlocality, for 
any measures of entanglement and nonlocality. 
Moreover, we prove that the hidden nonlocality 
of any bipartite state more, or equally, 
entangled than nonlocal states, can be 
revealed by local operations and the sending 
of two one-bit messages, one in each 
direction. For some particular states, one bit 
of communication is even enough.
\end{abstract} 

\maketitle 

\section{Introduction}

Bell nonlocality and quantum entanglement are two 
distinct notions, whose relation is not straightforward 
\cite{W,Ba,BCPSS,PRA}. Entanglement is a quantum 
resource \cite{BHORS,BCP,GMNSH,BG,CG,aX} which 
cannot be generated by deterministic state 
transformations involving only local operations and 
classical communication (LOCC) \cite{DHR,HHHH}. 
In other words, a state is necessarily changed into 
a less, or equally, entangled state, by such a LOCC 
transformation. Thus, any proper measure of 
entanglement cannot increase under LOCC 
operations \cite{HHHH,BDSW}. The states with 
vanishing entanglement, are the separable states, 
which are the mixtures of product states \cite{W}. 
Any of them, can be reached by LOCC, from 
any state. 

Whereas entanglement is defined strictly within 
the framework of quantum mechanics, this is not 
the case for Bell nonlocality. The latter concerns 
the outcomes of local measurements, quantum or 
not, performed by distant observers. Their joint 
probabilities are said to be Bell local, if they can be 
reproduced by a hidden-variable model, in which 
a measurement outcome is determined only by 
the corresponding measurement, and the hidden 
variables, and does not depend on any other 
measurement \cite{BCPSS}. As well known, when 
a set of probabilities is Bell local, it satisfies Bell 
inequalities, such as 
the Clauser-Horne-Shimony-Holt (CHSH) 
inequality \cite{Be,CHSH}. In the opposite case, 
it is said nonlocal, and violates such an inequality. 

For a separable state, the joint probabilities of local 
measurement results, are always Bell local. But this is 
not specific to separable density operators. 
Some entangled states have this property \cite{Ba,W}. 
A state for which there are measurements violating 
a Bell inequality, is said to be nonlocal. A pure state 
is nonlocal if and only if it is entangled 
\cite{CFS,HS,G,PR}. Nevertheless, even 
for pure states, the relation between quantum 
entanglement and Bell nonlocality, is not obvious. 
For the simplest composite system, consisting of two 
two-level systems, the more entangled a pure state is, 
the more it can violate the CHSH inequality 
\cite{VeWo,MS}. But the situation is less clear for other 
Bell inequalities \cite{ADGL,LVB,VW,JP,R}, or other 
measures of nonlocality \cite{E,TB,AGG,BGS}, which are 
not maximum for maximally entangled states. 
However, it can be argued that they are not correct 
measures of nonlocality, and that, hence, no "anomaly 
of nonlocality" can be evidenced using them \cite{FP}.

Different procedures have been proposed to violate a 
Bell inequality with a Bell local entangled state, and 
hence, in some sense, reveal its hidden nonlocality. 
Such a violation can, for instance, be obtained by using 
several copies of the same state \cite{NV,Pa,CABV}, by 
performing local pre-measurements and selecting 
specific outcomes \cite{P,G2,HQBB}, by combining 
these two approaches \cite{Pe}, or by using more 
sophisticated techniques \cite{DSBBZ,CASA,KLMG}. 
In these scenarios, the density operator 
whose nonlocality is tested, is not the Bell local state 
of interest. The corresponding transformations change 
it into a state which is nonlocal, and possibly also 
more entangled. For example, several copies of 
a density operator, constitute a state more, or 
equally, entangled than this density operator, since 
the former is transformed into the latter, by local 
partial traces, which are LOCC operations. Thus, 
the above-mentioned procedures are not helpful in 
the understanding of a potential anomaly of nonlocality. 

In this paper, we show that there exist Bell local 
states which are more, or equally, entangled than 
nonlocal ones. In this case, the anomaly of 
nonlocality is manifest, since any nonlocality 
measure increases in going from such a Bell local 
state to a corresponding nonlocal one, whereas 
entanglement measures do not. The outline of 
the paper is as follows. We first recall, 
in Sec.\ref{Gd}, what are Bell nonlocality, LOCC 
operations, and local filtering operations. 
In Sec.\ref{NaL}, we prove our main result, namely, 
that a bipartite state which can be changed into 
a nonlocal one, by local filtering, is more, or equally, 
entangled than nonlocal states. In this section, we also 
discuss a particular state, which is Bell local, and as 
entangled as a nonlocal state. In Sec.\ref{Ccran}, 
we address the issue of the amount of classical 
communication, required to transform, by LOCC, 
a given bipartite state into a nonlocal state. We show 
that, whenever such a transformation is possible, 
two observers can achieve it with the transmission 
of only one bit from one observer to the other, and 
one bit in the opposite direction. Moreover, 
for some states, a single bit suffices. Finally, 
in Sec.\ref{C}, we summarize our results and 
discuss an important question they raise.

\section{Preliminaries}\label{Gd}

In this section, we introduce the notions used throughout 
the paper, namely, the local filtering 
\cite{P,HQBB,G2,BCPSS}, the LOCC-based ordering of 
quantum states \cite{DHR,HHHH}, and the Bell 
nonlocality \cite{Be,CHSH,BCPSS}.

\subsection{Local filtering and LOCC operations}

In this paper, we consider two kinds of quantum state 
transformations. For two systems, say A and B, 
whose Hilbert spaces are, respectively, 
${\cal H}_\mathrm{A}$ and ${\cal H}_\mathrm{B}$, 
a local filtering operation, described by the operators $M$ 
on ${\cal H}_\mathrm{A}$, and $N$ on 
${\cal H}_\mathrm{B}$, changes the state $\rho$, 
of A and B, into
\begin{equation}
\hat \rho=M\otimes N\rho M^\dag \otimes N^\dag/p , 
\label{f}
\end{equation}
where $p=\operatorname{tr} 
(M^\dag M\otimes N^\dag N \rho) \in (0,1]$ 
\cite{BCPSS}. The operators $M$ and $N$ are such 
that $M^\dag M \le I_\mathrm{A}$ 
and $N^\dag N \le I_\mathrm{B}$, where 
$I_\mathrm{A}$ and $I_\mathrm{B}$ are the identity 
operators of A and B, respectively. 
This filtering transformation is stochastic. It is achieved 
by performing measurements on A and B, and selecting 
specific outcomes. The state $\hat \rho$ is obtained 
with probability $p$. 

The other transformations of interest for our purpose, are 
the LOCC operations. A state $\rho$ is more, or equally, 
entangled than another one $\rho'$, if and only if there 
is a LOCC map $\Lambda$, such that 
$\rho'=\Lambda(\rho)$ \cite{HHHH}. Such 
a transformation is deterministic, i.e., it gives $\rho'$, 
from $\rho$, with probability unity. It is a composition of 
local partial traces, and of one-way LOCC operations 
of the form  
\begin{equation}
\Lambda(\rho)=
\sum_i ( F_i\otimes I_\mathrm{C} ) \rho 
( F_i \otimes I_\mathrm{C} )^\dag 
\otimes | i \rangle \langle i | ,
\label{owLOCC}
\end{equation}
where C is A or B, the linear maps 
$F_i:{\cal H}_\mathrm{D} 
\rightarrow {\cal H}_{\mathrm{E}}$, 
with D the other system, B or A, are such that 
$\sum_i F_i^\dag F_i^{\phantom{\dag}}
=I_\mathrm{D}$, 
and $| i \rangle$ are orthonormal states of an ancillary 
system, close to C, see Appendix. The system E can be D 
itself, a subsystem of it, or a system of which D is 
a subsystem. The transformation \eqref{owLOCC} involves 
a measurement on one of the systems, by an observer, 
and the sending, to another observer, of the outcome $i$, 
which is recorded using the ancilla. If $i$ has two possible 
values, only one bit is exchanged between the two 
observers. Operations of the form of eq.\eqref{owLOCC}, 
play an essential role in what follows.

\subsection{Bell nonlocality}

Let us denote $p(ij|kl)$ the probability of the outcomes 
$i$ and $j$, of measurements, indexed by $k$ and $l$, 
performed on systems A and B, respectively. The set 
$\{ p(ij|kl) \}_{i,j,k,l}$ is Bell local if and only if 
these probabilities can be written as
\begin{equation}
p(ij|kl)=\int d\lambda q(\lambda) 
p(i|k\lambda)p(j|l\lambda) , \label{Bl}
\end{equation}
where $\lambda$ denotes hidden variables, $q$ 
a probability density function, and $p(i|k\lambda)$ 
a probability distribution of the outcome $i$ of 
the measurement $k$ \cite{BCPSS}. 

Within the framework of quantum mechanics, 
a measurement, on A, with $m_k$ outcomes, is 
described by a set of positive operators, 
${\bf A}_k=\{ A_{i|k} \}_{i=1}^{m_k}$, 
such that $\sum_{i=1}^{m_k} A_{i|k}
=I_\mathrm{A}$, and joint probabilities 
of measurements on A and B, are given by
\begin{equation}
p(ij|kl)=\operatorname{tr} 
\left( \rho A_{i|k} \otimes B_{j|l} \right) , 
\label{qp}
\end{equation}
for the state $\rho$ of A and B, where $B_{j|l}$ are 
the operators describing the measurement $l$ on B 
\cite{J}. The state $\rho$ is nonlocal if there are 
measurements ${\bf A}_k$ and ${\bf B}_l$ such that 
$\{ p(ij|kl) \}_{i,j,k,l}$ does not satisfy eq.\eqref{Bl}. 
If, on the contrary, the probabilities \eqref{qp} can be 
written in the form of eq.\eqref{Bl}, for 
any measurements ${\bf A}_k$ and ${\bf B}_l$, $\rho$ 
is a Bell local state. 

\section{Revealing nonlocality by LOCC}\label{NaL}

Some Bell local states can be changed into nonlocal ones, 
by local filtering \cite{P,G2,HQBB}. For such a state 
$\rho$, we show below that there are nonlocal states less, 
or equally, entangled than $\rho$. We then consider a 
particular Bell local state, first studied in Ref.\cite{HQBB}, 
which is as entangled as a nonlocal state.

\subsection{Main result}

\begin{prop}\label{pf}
If a bipartite state can be changed into a nonlocal state, 
by local filtering, then two observers can deterministically 
transform it into a nonlocal state, using local operations, 
and the transmission of one bit from one observer to 
the other, and one bit in the opposite direction.
\end{prop}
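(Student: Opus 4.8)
The plan is to convert a stochastic local filtering into a deterministic LOCC transformation that produces a nonlocal state with probability one. The key observation is that the filtering of eq.\eqref{f}, which succeeds only with probability $p<1$, can be embedded into a full measurement. Since $M^\dag M\le I_\mathrm{A}$, the operator $M$ can be completed to a two-outcome measurement on A with elements $\{M^\dag M, I_\mathrm{A}-M^\dag M\}$, and similarly $N$ completes to $\{N^\dag N, I_\mathrm{B}-N^\dag N\}$ on B. Each observer thus performs a binary measurement and records a single bit; the successful filtering corresponds to the outcome where both register the "pass" result. The strategy is to use the one-way operations of eq.\eqref{owLOCC}: A measures $\{M,\,\sqrt{I_\mathrm{A}-M^\dag M}\}$ and sends one bit to B announcing whether the filter succeeded, and B measures $\{N,\,\sqrt{I_\mathrm{B}-N^\dag N}\}$ and sends one bit back to A. This accounts for exactly the two bits, one in each direction, named in the statement.

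First I would write the deterministic LOCC map as a composition of two operations of the form \eqref{owLOCC}, one with C$=$B transmitting A's outcome, and one with C$=$A transmitting B's outcome, so that after communication both observers share the same two-bit record $(a,b)\in\{0,1\}^2$. On the "success" branch $(a,b)=(\text{pass},\text{pass})$, the conditional post-measurement state is exactly the filtered state $\hat\rho$ of eq.\eqref{f}, which by hypothesis is nonlocal. On the three "failure" branches, the observers obtain some other (possibly Bell local) states, but crucially these are flagged by the classical record, which is held in the orthonormal ancilla states $|i\rangle\langle i|$ appearing in \eqref{owLOCC}.

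The heart of the argument is that nonlocality of one branch of a classically flagged ensemble implies nonlocality of the whole deterministic output. I would argue that the final state has the block structure $\sum_{a,b} p_{ab}\,\sigma_{ab}\otimes|ab\rangle\langle ab|$, where $\sigma_{\text{pass,pass}}=\hat\rho$ and the flag registers $|ab\rangle$ are perfectly distinguishable and locally accessible to the observers. Because the observers can read the flag by a local projective measurement and then steer their subsequent measurement settings to the success block only, any Bell-inequality violation exhibited by $\hat\rho$ is reproduced by the composite state: one simply measures the flag, and conditioned on seeing $(\text{pass},\text{pass})$ applies the nonlocal-revealing measurements for $\hat\rho$, ignoring the other outcomes. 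Formally, the measurement operators on the enlarged space factor through the flag projector, so the joint statistics on the success block cannot be written in the local form of eq.\eqref{Bl} without contradicting the assumed nonlocality of $\hat\rho$.

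The main obstacle I anticipate is making the last step fully rigorous, namely showing that attaching the classical flags and the failure branches cannot wash out the nonlocality. The subtlety is that Bell locality is a property of the full set $\{p(ij|kl)\}$ for all measurement choices, so I must exhibit explicit enlarged measurements ${\bf A}_k$ and ${\bf B}_l$ on the output state whose statistics violate a Bell inequality, rather than merely arguing at the level of conditional states. The clean way to close this gap is to incorporate the flag-reading projectors into the POVM elements, writing each $A_{i|k}$ as the "success"-projector tensored with the optimal filtering-revealed measurement on A, and analogously for B, so that the resulting correlations equal those of $\hat\rho$ up to the known factor $p_{\text{pass,pass}}$; this reduces the whole claim to the nonlocality of $\hat\rho$ guaranteed by the hypothesis.
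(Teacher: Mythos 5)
Your LOCC protocol coincides exactly with the paper's: complete the filters via $M^\dag M+\tilde M^\dag \tilde M=I_\mathrm{A}$ and $N^\dag N+\tilde N^\dag \tilde N=I_\mathrm{B}$, apply two one-way operations of the form \eqref{owLOCC} with flag ancillas, one bit in each direction, producing the flagged state \eqref{rho2}. The gap is in the final step, and it is a genuine one. Your central claim --- read the flag locally, ``steer to the success block, ignoring the other outcomes'' --- is post-selection, which the definition of nonlocality of a state does not permit: the behavior $\{p(ij|kl)\}$ entering eq.~\eqref{Bl} must include every branch with its full weight, for fixed measurements. Conditional violation does not certify nonlocality of the unconditioned state; this is exactly the detection-loophole structure. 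Indeed your argument proves too much: applied to the local measurement ``filter, then measure'' performed directly on $\rho$ (whose pass--pass conditional statistics are precisely those of $\hat\rho$), it would show that every state with filter-revealable hidden nonlocality is already nonlocal, contradicting, e.g., the Bell-local state \eqref{rho} of the paper.

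Your proposed fix is also incomplete: operators of the form $P_0\otimes A_{i|k}$ alone do not sum to the identity, so ``correlations equal to those of $\hat\rho$ up to the factor $p$'' do not constitute a behavior; the POVM must be completed on $(I_{\mathrm{A'A''}}-P_0)$, and the true behavior is then a mixture $\tilde{\bf p}=p\,{\bf p}+(1-p)\,{\bf q}$ as in eq.~\eqref{rp}. Such a mixture is \emph{not} automatically outside ${\cal L}$: complete with uniformly random outcomes on the failure branches and the admixture acts as white noise, which can wash out any fixed violation when $p$ is small. The paper closes this with two ideas absent from your proposal: (i) exploit the shared two-bit record so that \emph{both} parties output fixed outcomes $r_k$, $s_l$ whenever either filter failed, eq.~\eqref{meas}, which makes ${\bf q}={\bf d}_\lambda$ a deterministic behavior (note that this, not post-selection, is the real role of the two exchanged bits --- without Bob's bit, Alice cannot respond deterministically on the branch where her filter passed but his failed, and cross terms would spoil the form \eqref{rp}); and (ii) prove that \emph{some} assignment $\lambda=(r_1,\ldots,s_1,\ldots)$ satisfies $p{\bf p}+(1-p){\bf d}_\lambda\notin{\cal L}$. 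The paper's argument for (ii): if every such mixture were local, convexity of ${\cal L}$ would give $p{\bf p}+(1-p){\bf p}'\in{\cal L}$ for every local ${\bf p}'$, and iterating yields the points $\left[1-(1-p)^n\right]{\bf p}+(1-p)^n{\bf p}'\in{\cal L}$, which converge to ${\bf p}\notin{\cal L}$, contradicting the closedness of ${\cal L}$. (Equivalently one may pick ${\bf d}_\lambda$ saturating a Bell inequality violated by ${\bf p}$.) Step (ii) is the actual heart of the proof; without it, your reduction ``to the nonlocality of $\hat\rho$'' does not go through.
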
 
\begin{proof}
Consider a composite system AB, consisting of systems 
A and B, and a state $\rho$, of AB, such that there are 
operators $M$ and $N$ for which the state \eqref{f} is 
nonlocal. Since $M^\dag M \le I_\mathrm{A}$ 
and $N^\dag N \le I_\mathrm{B}$, there exist operators 
$\tilde M$ and $\tilde N$ such that 
$M^\dag M+\tilde M^\dag \tilde M=I_\mathrm{A}$, 
and $N^\dag N+\tilde N^\dag \tilde N=I_\mathrm{B}$. 
The LOCC operation given by eq.\eqref{owLOCC} 
with orthonormal states $|i \rangle$ of a two-level 
system $\mathrm{B'}$, $F_0=|0'\rangle M$, 
and $F_1=|1'\rangle \tilde M$, where $|i' \rangle$ are 
orthonormal states of a two-level system $\mathrm{A'}$, 
transforms $\rho$ into 
\begin{equation}
\rho_1= R_0 \otimes M \rho M^\dag \otimes S_0 
+ R_1 \otimes \tilde M \rho \tilde M^\dag 
\otimes S_1 , \label{1LOCC}
\end{equation}
where $R_i=|i' \rangle \langle i' |$, 
$S_i=|i \rangle \langle i |$, and the short-hand notation 
$M=M\otimes I_\mathrm{B}$ is used. A similar operation, 
with $N$ and $\tilde N$, in place of $M$ and $\tilde M$, 
respectively, and two more two-level systems, 
$\mathrm{A''}$ and $\mathrm{B''}$, transforms $\rho_1$ 
into
\begin{equation}
\rho_2= \sum_{i=0}^3 P_i \otimes K_i \rho K_i^\dag 
\otimes Q_i , \label{rho2}
\end{equation}
where $P_i$ ($Q_i$) are four projectors of system 
$\mathrm{A'A''}$ ($\mathrm{B'B''}$), 
summing to $I_\mathrm{A'A''}$ ($I_\mathrm{B'B''}$), 
$K_0=M\otimes N$, $K_1=M\otimes \tilde N$, 
$K_2=\tilde M\otimes N$, and 
$K_3=\tilde M\otimes \tilde N$.

The filtered state $\hat \rho$, given by eq.\eqref{f}, is 
nonlocal, by assumption. Thus, there are measurements 
${\bf A}_k$ and ${\bf B}_l$ such that the vector 
${\bf p}$, termed behavior \cite{T}, whose components 
are the probabilities $p(ij|kl)$, given by eq.\eqref{qp} 
with the density operator $\hat \rho$, does not satisfy 
eq.\eqref{Bl}. We define the positive operators, 
on ${\cal H}_\mathrm{A'A''A}$, 
\begin{equation}
\tilde A_{i|k}=P_0 \otimes A_{i|k} + 
\delta_{i,r_k} (I_\mathrm{A'A''}-P_0 ) \otimes 
I_\mathrm{A} , \label{meas}
\end{equation}
where $r_k \in \{ 1, \ldots, m_k \}$, with $m_k$ 
the number of outcomes of ${\bf A}_k$. The set 
${\bf \tilde A}_k=\{ \tilde A_{i|k} \}_{i=1}^{m_k}$ 
constitutes a measurement, on $\mathrm{A'A''A}$, since 
$\sum_{i=1}^{m_k} \tilde A_{i|k}=I_\mathrm{A'A''A}$. 
Similarly, from a measurement ${\bf  B}_l$, on B, with 
$n_l$ outcomes, a measurement ${\bf \tilde B}_l$, 
on $\mathrm{BB'B''}$, involving an integer 
$s_l \in \{ 1, \ldots, n_l \}$, can be defined. 
The probabilities $\tilde p(ij|kl)=\operatorname{tr} 
( \rho_2 \tilde A_{i|k} \otimes \tilde B_{j|l})$ can be 
written as
\begin{equation}
\tilde p(ij|kl) = p p(ij|kl) 
+ (1-p) d_\lambda(ij|kl) , \label{rp}
\end{equation} 
where 
$p=\operatorname{tr} 
(M^\dag M\otimes N^\dag N \rho)$, 
$d_\lambda(ij|kl)=\delta_{i,r_k}\delta_{j,s_l}$, 
and $\lambda=(r_1,r_2,\ldots,s_1,s_2,\ldots)$.

The behavior ${\bf \tilde p}$ is Bell local if and only if 
it belongs to the compact convex polytope
\begin{equation}
{\cal L}=\left\{ \sum_\lambda q_\lambda {\bf d}_\lambda 
: q_ \lambda \ge 0, \sum_\lambda q_\lambda = 1 \right\} , 
\label{L}
\end{equation}
where the sums run over all $\lambda$ \cite{F,BCPSS}. 
There is a particular $\lambda$ such that 
${\bf \tilde p}=p{\bf p}+\bar p{\bf d}_\lambda 
\notin {\cal L}$, where $\bar p=1-p$. This can be seen 
as follows. Assume that, for any $\lambda$, 
$p{\bf p}+\bar p{\bf d}_\lambda \in {\cal L}$. 
This implies, together with the convexity of ${\cal L}$, 
that $p{\bf p}+\bar p{\bf p}' \in {\cal L}$, for any Bell 
local behavior 
${\bf p}'=\sum_\lambda q_\lambda {\bf d}_\lambda$. 
This gives a sequence of elements of ${\cal L}$, 
$[1-\bar p^n]{\bf p}+\bar p^n {\bf p}'$, that converges 
to ${\bf p}$, which is not possible since ${\cal L}$ is 
closed, and ${\bf p} \notin {\cal L}$. In conclusion, 
there are measurements ${\bf \tilde A}_k$ and 
${\bf \tilde B}_l$ for which the corresponding behavior 
${\bf \tilde p}$ is not in ${\cal L}$, and hence $\rho_2$ 
is nonlocal. The two-stage LOCC transformation 
$\rho \mapsto \rho_1 \mapsto \rho_2$ involves 
the sending of two one-bit messages, one in each 
direction. 
\end{proof}

The exchange and storage of classical information play 
a crucial role in the above transformation leading to 
a nonlocal state. To see it, consider the state $\rho_3$ 
obtained from the state \eqref{rho2}, by tracing out 
the systems $\mathrm{B'}$ and $\mathrm{A''}$, used 
to record the bits exchanged between the two observers. 
This density operator can be written as 
$\rho_3=\Lambda_\mathrm{B} 
\circ \Lambda_\mathrm{A} (\rho)$, where 
$\Lambda_\mathrm{A}$ and $\Lambda_\mathrm{B}$ are 
local operations. Thus, it is Bell local if $\rho$ is \cite{Ba}.

\subsection{Example}

As an example, we consider the state, of two three-level 
systems, A and B,
\begin{equation}
\rho=p|\psi \rangle \langle \psi |+p M\otimes \tilde N
+q\tilde M\otimes N+4q\tilde M\otimes \tilde N , 
\label{rho}
\end{equation}
where 
$|\psi \rangle=(|0 \rangle|0' \rangle+|1 \rangle|1' \rangle)
/\sqrt{2}$, with orthonormal states $|i \rangle$ of A, 
and $|i' \rangle$ of B, $q=(1-3p)/6$, $p \le 1/18$, 
$M=|0 \rangle \langle 0 |+|1 \rangle \langle 1 |$, 
and $\tilde M=|2 \rangle \langle 2 |$. The operators $N$ 
and $\tilde N$, on ${\cal H}_{\mathrm{B}}$, are given 
by similar expressions. This state has been shown to be 
Bell local \cite{HQBB}. The corresponding filtered state 
\eqref{f} is $|\psi \rangle \langle \psi |$, which maximally 
violates the CHSH inequality. It is attained with 
probability $p$.

The state \eqref{rho2}, obtained by LOCC from $\rho$, is 
here
\begin{multline}
\rho_2= p P_0 \otimes |\psi \rangle \langle \psi |
\otimes Q_0 
+ p P_1\otimes M\otimes \tilde N \otimes Q_1 \\
+q P_2\otimes \tilde M \otimes N \otimes Q_2
+4q P_3\otimes \tilde M \otimes \tilde N \otimes Q_3 , 
\label{rho2exp}
\end{multline}
It results, from the above proof, that it is nonlocal. 
This can be shown directly as follows. We find 
\begin{equation}
\left\langle A_1 (B_1+B_2) + A_2 (B_1-B_2) \right\rangle 
= 2 p (\sqrt{2}-1) + 2 \ge 2 , \nonumber
\end{equation}
where $A_k B_l=A_k \otimes B_l$, and 
$\langle \ldots \rangle=\operatorname{tr}(\rho_2 \ldots)$, 
for the dichotomic observables 
\begin{eqnarray}
A_k&=&P_0 \otimes \big(\sigma_k+\tilde M\big) 
+ \tilde P\otimes I_\mathrm{A} , \nonumber \\
B_l&=&\big([\sigma'_1+(3-2l) \sigma'_2]/\sqrt{2} 
+\tilde N\big)\otimes Q_0
+I_\mathrm{B} \otimes \tilde Q , \nonumber
\end{eqnarray}
where 
$\sigma_1=|0 \rangle \langle 0 |-|1 \rangle \langle 1 |$, 
$\sigma_2=|0 \rangle \langle 1 |+|1 \rangle \langle 0 |$, 
$\sigma'_1$ and $\sigma'_2$ are defined similarly 
for system B, $\tilde P=I_\mathrm{A'A''}-P_0$, 
and $\tilde Q=I_\mathrm{B'B''}-Q_0$. That is 
to say, $\rho_2$ violates the CHSH inequality. Other 
measurements may lead to a larger violation. It is also 
possible that other Bell inequalities could be more 
appropriate. The local operations consisting in tracing 
out the systems $\mathrm{A'A''}$ and $\mathrm{B'B''}$, 
change the state \eqref{rho2exp} back into the state 
\eqref{rho}. They are thus equally entangled, whereas 
one is Bell local and the other is nonlocal.

\section{Amount of communication required to reveal 
nonlocality}\label{Ccran}

We have seen above that the hidden nonlocality of some 
bipartite states, can be revealed by sending two one-bit 
messages, one in each direction. One can wonder whether 
this can be achieved with less communication. Some 
exchange of information is necessary, since local 
operations alone cannot transform a Bell local state into 
a nonlocal one \cite{Ba}. We show below that one bit 
of communication is enough for some states. We then 
prove that, for any state more, or equally, entangled than 
nonlocal states, it can be done with only one bit per 
direction.

\subsection{One bit is the minimum}

The proof below applies to Bell local states that can be 
changed into nonlocal states, with only one local filter. 
Let us first show that there exist such states. Consider 
a Bell local state $\rho$, such that the filtered state 
\eqref{f} is nonlocal, and the state 
$\rho' \propto M \otimes I_\mathrm{B}\rho 
M^\dag \otimes I_\mathrm{B}$, obtained, from $\rho$, 
by applying only the filter described by $M$. The latter is 
transformed into the nonlocal state \eqref{f}, by the filter 
described by $N$. If $\rho'$ is nonlocal, then the hidden 
nonlocality of $\rho$ can be revealed with one local filter. 
If, on the contrary, $\rho'$ is Bell local, then its hidden 
nonlocality can be revealed with one local filter.
\begin{prop}\label{p1f}
If a bipartite state can be changed into a nonlocal state, 
with one local filter, then it can be deterministically 
transformed into a nonlocal state, with local operations, 
and one bit of communication.
\end{prop}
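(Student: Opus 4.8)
The plan is to adapt the proof of Proposition \ref{pf}, replacing the two one-way LOCC operations used there by a single one. By hypothesis there is one filter, which I may take to act on A (the case of a filter on B being identical by symmetry), described by an operator $M$ with $M^\dagger M\le I_\mathrm{A}$, such that the state $\rho'\propto M\otimes I_\mathrm{B}\,\rho\,M^\dagger\otimes I_\mathrm{B}$ is nonlocal. As in the proof of Proposition \ref{pf}, I would first complete $M$ to a measurement by choosing $\tilde M$ with $M^\dagger M+\tilde M^\dagger\tilde M=I_\mathrm{A}$, and then apply the single operation \eqref{owLOCC} with $F_0=|0'\rangle M$ and $F_1=|1'\rangle\tilde M$. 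This measures A, records the outcome in a two-level system $\mathrm{A'}$ through the $F_i$, and transmits the same outcome to a two-level system $\mathrm{B'}$ near B through the ancilla, producing exactly the state \eqref{1LOCC}, namely $\rho_1=R_0\otimes M\rho M^\dagger\otimes S_0+R_1\otimes\tilde M\rho\tilde M^\dagger\otimes S_1$, at the cost of \emph{one} bit sent from A to B. The essential feature is that this single bit leaves both parties with perfectly correlated records of whether the filter succeeded: A holds $R_0$ or $R_1$ on $\mathrm{A'}$, and B the matching $S_0$ or $S_1$ on $\mathrm{B'}$.

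Since $\rho'$ is nonlocal, there are measurements ${\bf A}_k$ and ${\bf B}_l$ whose behavior ${\bf p}$, computed from $\rho'$ via \eqref{qp}, lies outside the local polytope ${\cal L}$ of \eqref{L}. In analogy with \eqref{meas}, I would then define measurements conditioned on the ancillary records, $\tilde A_{i|k}=R_0\otimes A_{i|k}+\delta_{i,r_k}R_1\otimes I_\mathrm{A}$ acting on $\mathrm{A'A}$, and $\tilde B_{j|l}=S_0\otimes B_{j|l}+\delta_{j,s_l}S_1\otimes I_\mathrm{B}$ acting on $\mathrm{BB'}$, with integers $r_k$ and $s_l$ freely chosen. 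These are valid local measurements, since $\sum_i\tilde A_{i|k}=I_\mathrm{A'A}$ and $\sum_j\tilde B_{j|l}=I_\mathrm{BB'}$, with A acting only on $\mathrm{A'A}$ and B only on $\mathrm{BB'}$. Because the $\mathrm{A'}$ and $\mathrm{B'}$ records in $\rho_1$ are perfectly correlated, the cross terms proportional to $R_0\otimes S_1$ and $R_1\otimes S_0$ vanish, and the resulting behavior factorizes exactly as in \eqref{rp}, giving $\tilde p(ij|kl)=p\,p(ij|kl)+\bar p\,\delta_{i,r_k}\delta_{j,s_l}$ with $p=\operatorname{tr}(M^\dagger M\otimes I_\mathrm{B}\,\rho)$ and $\bar p=1-p$. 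The only change from Proposition \ref{pf} is that $p$ now weights the behavior of the \emph{singly} filtered $\rho'$ rather than of the doubly filtered $\hat\rho$.

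It then remains to invoke the convexity and closure argument of Proposition \ref{pf} verbatim. Were $p{\bf p}+\bar p{\bf d}_\lambda\in{\cal L}$ for every deterministic completion $\lambda=(r_1,r_2,\ldots,s_1,s_2,\ldots)$, convexity of ${\cal L}$ would force $p{\bf p}+\bar p{\bf p}'\in{\cal L}$ for every local ${\bf p}'$, and iterating would yield a sequence $[1-\bar p^{\,n}]{\bf p}+\bar p^{\,n}{\bf p}'$ in ${\cal L}$ converging to ${\bf p}\notin{\cal L}$, contradicting the closedness of ${\cal L}$. Hence some $\lambda$ makes $\tilde{\bf p}\notin{\cal L}$, so $\rho_1$ is nonlocal; and since $\rho_1$ is reached from $\rho$ by a single operation \eqref{owLOCC}, the whole transformation uses local operations and one bit of communication.

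The main obstacle, conceptually, is to see that one bit genuinely lets \emph{both} observers switch between the nonlocality-witnessing measurement and a deterministic outcome, which is precisely what gives access to the full family of completions ${\bf d}_\lambda$ on which the limiting argument relies. If only the filtering party could condition its measurement, the failure branch would contribute a single fixed local behavior and the convexity argument would collapse. The resolution is that the lone filter outcome, once measured by A and labelled by one transmitted bit, is known to both parties, so a single transmission suffices to correlate the two ancillas and grant both observers the required conditioning. Everything else—the validity of the measurements, the vanishing of the cross terms, and the closure argument—transfers directly from the proof of Proposition \ref{pf}.
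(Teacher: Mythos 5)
Your proposal is correct and follows essentially the same route as the paper's own proof: a single one-way LOCC operation of the form \eqref{owLOCC} with $F_0=|0'\rangle M$, $F_1=|1'\rangle \tilde M$ yielding the state \eqref{1LOCC}, the ancilla-conditioned measurements $\tilde A_{i|k}$ and $\tilde B_{j|l}$ producing a behavior of the form \eqref{rp} with $p=\operatorname{tr}(M^\dag M\otimes I_\mathrm{B}\,\rho)$, and the convexity--closedness argument from the proof of Proposition \ref{pf} to extract a deterministic completion ${\bf d}_\lambda$ making $\tilde{\bf p}\notin{\cal L}$. Your added remark about why one transmitted bit suffices to give both observers the conditioning needed for the full family of completions is a faithful gloss on the mechanism, not a deviation from it.
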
 
\begin{proof}
Consider two systems A and B, and a state $\rho$ 
of these systems, such that there is an operator $M$ 
on ${\cal H}_A$, for which the state 
$\rho'=M\otimes I_\mathrm{B}\rho 
M^\dag\otimes I_\mathrm{B}/p$, where 
$p=\operatorname{tr} 
(M^\dag M\otimes I_\mathrm{B}\rho)$, 
is nonlocal. There exists $\tilde M$ such that 
$M^\dag M+\tilde M^\dag \tilde M=I_\mathrm{A}$. 
Since $\rho'$ is nonlocal, there are measurements 
${\bf A}_k$, with $m_k$ outcomes, and ${\bf B}_l$, 
with $n_l$ outcomes, such that the behavior ${\bf p}$, 
whose components are the probabilities $p(ij|kl)$, 
given by eq.\eqref{qp} with the density operator $\rho'$, 
is nonlocal, i.e., does not satisfy eq.\eqref{Bl}. Let us 
introduce two two-level systems $\mathrm{A'}$ and 
$\mathrm{B'}$, and define the positive operators, 
on ${\cal H}_\mathrm{A'A}$, 
\begin{equation}
\tilde A_{i|k}=|0' \rangle \langle 0' | \otimes A_{i|k} + 
\delta_{i,r_k} |1' \rangle \langle 1' | \otimes I_\mathrm{A} 
, \nonumber
\end{equation} 
where $|i' \rangle$ are orthonormal states of 
$\mathrm{A'}$, and $r_k \in \{ 1, \ldots, m_k \}$, 
and similar ones, $\tilde B_{j|l}$, 
on ${\cal H}_\mathrm{BB'}$, with an integer 
of $\{ 1, \ldots, n_l \}$, and orthonormal states 
of $\mathrm{B'}$. The set 
${\bf \tilde A}_k=\{ \tilde A_{i|k} \}_{i=1}^{m_k}$ 
constitutes a measurement on $\mathrm{A'A}$, 
and ${\bf \tilde B}_l$ on $\mathrm{BB'}$. 
The probabilities 
$\tilde p(ij|kl)=\operatorname{tr} ( \rho_1 \tilde A_{i|k} 
\otimes \tilde B_{j|l})$, where $\rho_1$ is given 
by eq.\eqref{1LOCC}, can be cast into the form of 
eq.\eqref{rp}, with $p$ defined above. Since ${\bf p}$ 
is nonlocal, there is a deterministic behavior 
${\bf d}_\lambda$, such that 
${\bf \tilde p}=p{\bf p}+(1-p){\bf d}_\lambda$ is 
nonlocal, see the proof of proposition \ref{pf}, and 
hence $\rho_1$ is nonlocal. The LOCC transformation 
$\rho \mapsto \rho_1$ involves the sending of only one 
one-bit message. 
\end{proof}

\subsection{One bit per direction is enough}

\begin{prop}
A bipartite state is more, or equally, entangled than 
a nonlocal state, if and only if it can be deterministically 
transformed into a nonlocal state, with local operations, 
and two one-bit messages, one in each direction.
\end{prop}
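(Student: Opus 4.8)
The plan is to prove the two implications separately, with the reverse direction being immediate and the forward direction reducing to Proposition~\ref{pf}. For the ``if'' part, suppose $\rho$ can be deterministically transformed into a nonlocal state $\tau$ using local operations and two one-bit messages, one in each direction. Such a procedure is a particular deterministic LOCC map $\Lambda'$ with $\tau=\Lambda'(\rho)$. By the LOCC-based characterization of entanglement ordering recalled in Sec.\ref{Gd}, $\rho$ is then more, or equally, entangled than $\tau$, which is nonlocal. This settles this direction with no further work.

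For the ``only if'' part, assume $\rho$ is more, or equally, entangled than a nonlocal state $\sigma$, so that $\sigma=\Lambda(\rho)$ for some deterministic LOCC map $\Lambda$. The first step is to exploit the structure of $\Lambda$. Being a composition of local partial traces and one-way operations of the form \eqref{owLOCC}, each of which has Kraus operators that factorize across the A:B cut, the whole map admits product Kraus operators,
\begin{equation}
\sigma=\sum_\mu (E_\mu\otimes F_\mu)\,\rho\,(E_\mu\otimes F_\mu)^\dag , \nonumber
\end{equation}
with $\sum_\mu E_\mu^\dag E_\mu\otimes F_\mu^\dag F_\mu=I_\mathrm{A}\otimes I_\mathrm{B}$. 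Hence $\sigma=\sum_\mu w_\mu\sigma_\mu$ appears as a convex mixture of the normalized branch states $\sigma_\mu\propto (E_\mu\otimes F_\mu)\rho(E_\mu\otimes F_\mu)^\dag$, with weights $w_\mu=\operatorname{tr}[(E_\mu\otimes F_\mu)\rho(E_\mu\otimes F_\mu)^\dag]$ summing to one.

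The second step is a convexity argument in the spirit of the proof of Proposition~\ref{pf}. Since $\sigma$ is nonlocal, there are measurements ${\bf A}_k$ and ${\bf B}_l$ whose behavior ${\bf p}(\sigma)$ lies outside the local polytope \eqref{L}. As ${\bf p}(\sigma)=\sum_\mu w_\mu{\bf p}(\sigma_\mu)$ for these same measurements, and ${\cal L}$ is convex, at least one branch behavior ${\bf p}(\sigma_{\mu^\ast})$ must fall outside ${\cal L}$; that is, some $\sigma_{\mu^\ast}$ is nonlocal. But $\sigma_{\mu^\ast}$ is exactly the filtered state \eqref{f} obtained from $\rho$ with the product filter $E_{\mu^\ast}\otimes F_{\mu^\ast}$, up to a harmless rescaling of $E_{\mu^\ast}$ and $F_{\mu^\ast}$ by scalars enforcing the contraction conditions $M^\dag M\le I_\mathrm{A}$ and $N^\dag N\le I_\mathrm{B}$, which leaves the normalized state unchanged. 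Thus $\rho$ can be changed into a nonlocal state by local filtering, and Proposition~\ref{pf} then yields the desired deterministic transformation using local operations and two one-bit messages, one in each direction.

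The step I expect to be the main obstacle is the first one, namely justifying cleanly that the deterministic LOCC map relating $\rho$ and $\sigma$ factorizes into product Kraus operators across the bipartite cut, so that $\sigma$ genuinely appears as a convex mixture of product-filtered branches of $\rho$. Everything downstream --- the convexity and closedness argument extracting a single nonlocal branch, and the reduction to Proposition~\ref{pf} --- is routine once this separable-operation structure is established.
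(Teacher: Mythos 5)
Your proof is correct and follows essentially the same route as the paper's: invoke the separability of deterministic LOCC maps to write the nonlocal target as a convex mixture of product-filtered branches of $\rho$, use convexity of the local polytope \eqref{L} to extract a nonlocal branch, reduce to Proposition~\ref{pf}, and get the converse immediately from the entanglement ordering. The only minor difference is in securing the contraction conditions needed for Proposition~\ref{pf}: the paper derives $M_{\iota}^\dag M^{\phantom{\dag}}_{\iota} \le I_\mathrm{A}$ and $N_{\iota}^\dag N^{\phantom{\dag}}_{\iota} \le I_\mathrm{B}$ structurally from its Appendix decomposition of LOCC maps, whereas you enforce them by scalar rescaling of $E_{\mu^\ast}$ and $F_{\mu^\ast}$ --- equally valid, since the normalized filtered state \eqref{f} is invariant under such rescaling.
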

\begin{proof}
Consider a state $\rho$, of systems A and B, such that 
there is a LOCC operation $\Lambda$, for which 
$\rho'=\Lambda(\rho)$ is nonlocal. The map $\Lambda$ 
is separable \cite{DHR,BDF}, i.e., 
$\rho'=\sum_i q_i \omega_i$ with the states 
$\omega_i=  
M_i\otimes N_i \rho M_i^\dag \otimes N_i^\dag  /q_i$, 
where the operators $M_i$ and $N_i$, 
on ${\cal H}_\mathrm{A}$ and ${\cal H}_\mathrm{B}$, 
respectively, are such that 
$\sum_i M_i^\dag M_i^{\phantom{\dag}}  
\otimes N_i^\dag N_i^{\phantom{\dag}}
=I_\mathrm{AB}$, 
and $q_i=\operatorname{tr} 
(M_i^\dag M_i^{\phantom{\dag}} 
\otimes N_i^\dag N_i^{\phantom{\dag}}  \rho)$. 
Since $\rho'$ is nonlocal, there are measurements 
${\bf A}_k$ and ${\bf B}_l$ such that 
the corresponding behavior ${\bf p}$ is nonlocal. 
It can be written as ${\bf p}=\sum_i q_i {\bf p}_i$, 
where ${\bf p}_i$ is the behavior for the state $\omega_i$, 
and the measurements ${\bf A}_k$ and ${\bf B}_l$. 
Since the set ${\cal L} $ of the Bell local behaviors, given 
by eq.\eqref{L}, is convex, ${\bf p} \notin {\cal L}$, 
and $\sum_i q_i=1$, there is $\iota$ such that 
${\bf p}_{\iota} \notin {\cal L}$. 

The corresponding operators $M_{\iota}$ and 
$N_{\iota}$ obey 
$M_{\iota}^\dag M^{\phantom{\dag}}_{\iota} 
\le I_\mathrm{A}$, and 
$N_{\iota}^\dag N^{\phantom{\dag}}_{\iota} 
\le I_\mathrm{B}$, see Appendix. Thus, 
there are $\tilde M$ and $\tilde N$ such that 
$M_{\iota}^\dag M^{\phantom{\dag}}_{\iota}
+\tilde M^\dag \tilde M=I_\mathrm{A}$, and 
$N_{\iota}^\dag N^{\phantom{\dag}}_{\iota}
+\tilde N^\dag \tilde N=I_\mathrm{B}$. As shown 
in the proof of proposition \ref{pf}, there is 
a LOCC operation, involving the sending 
of two one-bit messages, one in each direction, that 
transforms $\rho$ into the state $\rho_2$, given by 
eq.\eqref{rho2} with 
$K_0=M_{\iota}\otimes N_{\iota}$, 
$K_1=M_{\iota}\otimes \tilde N$, 
$K_2=\tilde M\otimes N_{\iota}$, and 
$K_3=\tilde M\otimes \tilde N$. 
Since ${\bf p}_{\iota}\notin {\cal L}$, there are 
measurements ${\bf \tilde A}_k$, given by 
eq.\eqref{meas}, and ${\bf \tilde B}_l$, defined similarly 
from ${\bf B}_l$, such that the behavior of components 
$\operatorname{tr} 
( \rho_2 \tilde A_{i|k} \otimes \tilde B_{j|l})$, is nonlocal, 
and hence $\rho_2$ is nonlocal, see the proof of 
proposition \ref{pf}. 

The converse follows directly from the definition of 
the entanglement ordering.
\end{proof}

\section{Conclusion}\label{C}

In summary, we have shown that there are states which 
are Bell local, but more, or equally, entangled than 
nonlocal ones. They are those with a hidden 
nonlocality that can be revealed by local filtering. 
For these states, there is a clear anomaly of nonlocality, 
for any measures of entanglement and nonlocality. 
We have also proved that any state more, or equally, 
entangled than nonlocal ones, can be changed into 
a nonlocal state, with local operations and only two 
bits of communication, one in each direction. 
For some particular states, a single bit is even enough. 

A natural question arising from these results, is whether 
all entangled states are more, or equally, entangled than 
nonlocal states. In other words, do all Bell local 
entangled states have a hidden nonlocality that can be 
revealed by LOCC ? It has been shown recently that 
the answer to the similar question for local filtering, is 
negative \cite{HQBVB}. But this does not imply 
a negative answer for LOCC. Due to our last result, 
this issue can be addressed by considering only LOCC 
operations involving the sending of a single one-bit 
message per direction.

\section*{APPENDIX: ONE-WAY LOCC DECOMPOSITION}

In this Appendix, we show how any LOCC operation can 
be obtained from a sequence of one-way LOCC maps of 
the form of eq.\eqref{owLOCC}. 
Any LOCC transformation of a state $\rho$ of a bipartite 
system $\mathrm{A}_1\mathrm{B}_1$, can be written as 
$\Lambda(\rho)=\sum_{\bf i} 
K^{\phantom{\dag}}_{\bf i} \rho K^\dag_{\bf i}$, 
where ${\bf i}=(i_1, \ldots, i_{2n})$, $i_r$ runs from 
$1$ to $d_r$, and 
\begin{eqnarray}
K_{\bf i}&=&
\left(M^{(2n-1)}_{{\bf i}_{2n-1}}
\otimes N^{(2n)}_{{\bf i}_{2n}}\right)
\ldots \left(M^{(1)}_{i_1} 
\otimes N^{(2)}_{{\bf i}_2}\right) , \nonumber \\ 
\nonumber
&=&M^{(2n-1)}_{{\bf i}_{2n-1}} \ldots M^{(1)}_{i_1}
\otimes N^{(2n)}_{{\bf i}_{2n}}\ldots 
N^{(2)}_{{\bf i}_2} ,
\end{eqnarray} 
with ${\bf i}_r=(i_1, \ldots, i_r)$. The linear maps 
$M^{(2r-1)}_{{\bf i}_{2r-1}}:
{\cal H}_{\mathrm{A}_r}\rightarrow 
{\cal H}_{\mathrm{A}_{r+1}}$ satisfy
\begin{equation}
\sum_{i_{2r-1}=1}^{d_{2r-1}} 
\left( M^{(2r-1)}_{{\bf i}_{2r-2},i_{2r-1}}\right)^\dag
M^{(2r-1)}_{{\bf i}_{2r-2},i_{2r-1}}=I_{\mathrm{A}_r} 
\nonumber ,
\end{equation} 
and the operators $N^{(2r)}_{{\bf i}_{2r}}:
{\cal H}_{\mathrm{B}_r}\rightarrow 
{\cal H}_{\mathrm{B}_{r+1}}$ obey similar 
relations \cite{DHR}. We remark that the above 
equality gives, for any $| \psi \rangle 
\in {\cal H}_{\mathrm{A}_r}$,
$\langle \psi |
( M^{(2r-1)}_{{\bf i}_{2r-1}})^\dag
M^{(2r-1)}_{{\bf i}_{2r-1}}| \psi \rangle 
\le \langle \psi | \psi \rangle$, and hence, 
for any $| \psi \rangle 
\in {\cal H}_{\mathrm{A}_1}$,

\begin{equation}
\langle \psi |\left( 
M^{(2n-1)}_{{\bf i}_{2n-1}} \ldots M^{(1)}_{i_1} 
\right)^\dag
M^{(2n-1)}_{{\bf i}_{2n-1}} \ldots M^{(1)}_{i_1}
| \psi \rangle \le \langle \psi | \psi \rangle .
\nonumber
\end{equation}

Let us introduce the systems 
$\mathrm{A}'_r$ and $\mathrm{B}'_r$, 
of Hilbert space dimension $d_r$, 
where $r \in \{1, \ldots,2n\}$, 
and the composite systems $\mathrm{A}^{[r]}
=\mathrm{A}'_1 
\ldots \mathrm{A}'_{2r-1}\mathrm{A}_{r+1}$, 
and $\mathrm{B}^{[r]}
=\mathrm{B}_{r+1}\mathrm{B}'_1 
\ldots \mathrm{B}'_{2r}$. 
We denote by $P^{(r)}_{i_r}$ projectors such that 
$\sum_{i_r=1}^{d_r} P^{(r)}_{i_r}
=I_{\mathrm{A}'_r}$, 
and $Q^{(r)}_{i_r}$ similar projectors for 
$\mathrm{B}'_r$, and define the one-way LOCC 
operations $\Lambda_1,\ldots, \Lambda_{2n}$ by
\begin{widetext}
\begin{eqnarray}
\Lambda_1(\rho_1)&=&\sum_{i_1} 
P^{(1)}_{i_1} \otimes 
\left(M^{(1)}_{i_1}\otimes I_{\mathrm{B}_1} \right)  
\rho_1 
\left(M^{(1)}_{i_1}\otimes I_{\mathrm{B}_1} 
\right)^\dag 
\otimes Q^{(1)}_{i_1} \nonumber \\
\Lambda_{2r}(\rho_{2r})&=&\sum_{{\bf i}_{2r}} 
P^{(2r)}_{i_{2r}} \otimes
\left( I_{\mathrm{A}^{[r]}}
\otimes N^{(2r)}_{{\bf i}_{2r}} 
\otimes Q_{{\bf i}_{2r-1}} \right) \rho_{2r} 
\left( I_{\mathrm{A}^{[r]}}
\otimes N^{(2r)}_{{\bf i}_{2r}} 
\otimes Q_{{\bf i}_{2r-1}}  \right)^\dag 
\otimes Q^{(2r)}_{i_{2r}} \nonumber \\
\Lambda_{2r+1}(\rho_{2r+1})
&=&\sum_{{\bf i}_{2 r+1}} 
P^{(2 r+1)}_{i_{2 r+1}} \otimes
\left( P_{{\bf i}_{2r}} 
\otimes M^{(2 r+1)}_{{\bf i}_{2 r+1}} 
\otimes I_{\mathrm{B}^{[r]}} \right)\rho_{2r+1} 
\left(  P_{{\bf i}_{2r}} 
\otimes M^{(2 r+1)}_{{\bf i}_{2 r+1}} 
\otimes I_{\mathrm{B}^{[r]}}  \right)^\dag 
\otimes Q^{(2 r+1)}_{i_{2 r+1}} , \nonumber 
\end{eqnarray}
\end{widetext}
where $P_{{\bf i}_r}=P^{(r)}_{i_r} \otimes 
P^{(r-1)}_{i_{r-1}} \otimes \ldots 
\otimes P^{(1)}_{i_1}$, $i_r$ runs from 1 to 
$d_r$, $\rho_1$ 
is a state of $\mathrm{A}_1\mathrm{B}_1$, 
$\rho_{2r}$ of $\mathrm{A}^{[r]}\mathrm{B}_{r}
\mathrm{B}'_1 \ldots \mathrm{B}'_{2r-1}$, 
and $\rho_{2r+1}$ of 
$\mathrm{A}'_1 \ldots \mathrm{A}'_{2r}
\mathrm{A}_{r+1}\mathrm{B}^{[r]}$. 
The LOCC map 
$\Phi=\Lambda_{2n} \circ \ldots \circ \Lambda_1$ 
transforms a state $\rho$ of 
$\mathrm{A}_1\mathrm{B}_1$, 
into $\Phi(\rho)=\sum_{\bf i} P_{\bf i} \otimes
K^{\phantom{\dag}}_{\bf i} \rho K^\dag_{\bf i} 
\otimes Q_{\bf i}$. Tracing out the ancillary systems 
$\mathrm{A}'_r$ and $\mathrm{B}'_r$, 
gives $\Lambda(\rho)$.

\end{document}